\newcommand{\ecalculus}{$\forall$\textsf{Exp+Res}\xspace}
\newcommand{\irc}{\textsf{IR-calc}\xspace}
\newcommand{\irmc}{\textsf{IRM-calc}\xspace}
\newcommand{\qrc}{\textsf{Q-Res}\xspace}
\newcommand{\qurc}{\textsf{QU-Res}\xspace}
\newcommand{\qrat}{\textsf{QRAT}\xspace}
\newcommand{\lqrc}{\textsf{LD-Q-Res}\xspace}
\newcommand{\mergeres}{\textsf{MRes}\xspace}
\newcommand{\qcp}{\textsf{CP+}$\forall$\textsf{red}\xspace}
\newcommand{\qef}{\textsf{eFrege+}$\forall$\textsf{red}\xspace}
\tikzstyle{uedge}=[draw=blue!50!red]
\tikzstyle{fedge}=[draw=blue]
\tikzstyle{iedge}=[draw=red]
\tikzstyle{redge}=[draw=green!50!black]
\tikzstyle{rnode}=[draw,inner sep=2pt,color=black]
\tikzstyle{tnode}=[circle,minimum width=3pt,fill,inner sep=0pt]
\tikzstyle{dotnode}=[circle,minimum width=2pt,fill,inner sep=0pt]
\tikzstyle{labn}=[font=\sffamily,circle,fill=white,inner sep=1pt,draw=black]
\tikzstyle{legn}=[font=\scriptsize]
\tikzstyle{reln}=[circle,fill=white,inner sep=.4pt,draw=black]
\tikzstyle{oreln}=[circle,fill=white,inner sep=.4pt,draw=black!50,solid]
\tikzstyle{oree}=[thick,draw=black!50,densely dashed]
\tikzstyle{ree}=[thick,draw=black]
\tikzstyle{calcn}=[rectangle%
\tikzstyle{expcalcn}=[rectangle%
\tikzstyle{expcalcn2}=[rectangle%
\tikzstyle{Style2}=[rectangle%
\tikzset{Style1/.style={black, draw=gray, minimum size=1cm}}
\title{QRAT Polynomially Simulates Merge Resolution.} 
\titlerunning{QRAT Polynomially Simulates Merge Resolution.} 
\author{Sravanthi Chede}{Department of Computer Science and Engineering, IIT Ropar, India \\sravanthi.20csz0001@iitrpr.ac.in }{}{https://orcid.org/0000-0001-7170-6156}{}
\author{Anil Shukla}{Department of Computer Science and Engineering, IIT Ropar, India\\ anilshukla@iitrpr.ac.in}{}{}{}
\authorrunning{S. Chede and A. Shukla} 
\keywords{Proof Complexity, QBF, Simulation, \qrat, Merge Resolution} 
\begin{document}

\maketitle

\begin{abstract}
Merge Resolution (\mergeres~\cite{mres_paper}) is a refutational proof system for quantified Boolean formulas (QBF). Each line of \mergeres consists of clauses with only existential literals, together with information of countermodels stored as merge maps. As a result, \mergeres has strategy extraction by design. The \qrat~\cite{qrat_paper} proof system was designed to capture QBF preprocessing. QRAT can simulate both the expansion-based proof system \ecalculus and CDCL-based QBF proof system \lqrc. 

A family of false QBFs called SquaredEquality formulas were introduced in~\cite{mres_paper} and shown to be easy for \mergeres but need exponential size proofs in \qrc, \qurc, \qcp, \ecalculus, \irc and reductionless \lqrc. As a result none of these systems can simulate \mergeres. In this paper, we show a short QRAT refutation of the SquaredEquality formulas. We further show that QRAT strictly p-simulates \mergeres. 
Besides highlighting the power of \qrat system, this work also presents the first simulation result for \mergeres. 
\end{abstract}
\section{Introduction}
Quantified Boolean formulas (QBF) extend propositional logic with quantifications, there exists ($\exists$) and for all ($\forall$).
QBF proof complexity deals with understanding the limitations and strength of various QBF solving approaches. In the literature, there exists mainly two solving approaches i.e. Conflict-Driven-Clause-Learning (CDCL) and expansion-based solving. Several QBF proof systems have been developed to capture these solving approaches. Q-resolution (\qrc)~\cite{KBKF95} is the base of CDCL-based approach. It is further extended to QU-resolution~(\qurc)~\cite{qures_simul} and Long-Distance-resolution (\lqrc)~\cite{ldqres_paper}. On the other hand, proof system \ecalculus~\cite{JM15} is the base of expansion-based solving. It is further extended to powerful proof systems \irc~\cite{BCJ14} and \irmc~\cite{BCJ14}. The simulation orders of these proof systems are well studied in the literature~\cite[Figure 1]{BCJ15}.

Recently, a new proof system Merge resolution (\mergeres)~\cite{mres_paper} has been developed. It follows a different QBF-solving approach. In \mergeres, winning strategies for the universal player are explicitly represented within the proof in the form of deterministic branching programs, known as merge maps~\cite{mres_paper}. \mergeres builds partial strategies at each line of the proof such that the strategy at the last line (corresponding to the empty clause) forms the complete countermodel for the input QBF. As a result, \mergeres admits strategy extraction by design. While performing resolution steps, \mergeres merges the partial strategies of the two hypotheses carefully if their corresponding merge maps are isomorphic or consistent. Note that whether two merge maps are isomorphic or consistent can be checked efficiently. This allows those resolution steps to be performed in \mergeres which would have been blocked in \lqrc. 

To be precise, in \lqrc universal variables $u$ could appear in both polarities in the hypotheses and get merged in the resolvent provided $u$ appears in the right of the pivot variable in the quantifier prefix. \mergeres relaxed this restriction by allowing resolution steps even if $u$ is on the left of the pivot variable provided the merge maps of $u$ in both the hypotheses are isomorphic. This makes \mergeres powerful as compared to reductionless \lqrc~\cite{BjornerJK15,PeitlSS19a}. In fact there exists a family of false QBFs SquaredEquality formulas (Definition~\ref{def:squared}) with short refutations in \mergeres~\cite{mres_paper} but require exponential size refutations in \qrc, \qurc, \qcp~\cite{qcp18}, \ecalculus, \irc~\cite{BeyersdorffB20,BeyersdorffBH19} and reductionless \lqrc~\cite{mres_paper}. Therefore, none of these proof systems can simulate \mergeres. 

Quantified Resolution Asymmetric Tautologies (\qrat) proof system is introduced in~\cite{qrat_paper} to capture the preprocessing steps performed by several QBF-solvers. It has been shown in~\cite{qrat_paper} that \qrat can efficiently simulate all the existing preprocessing steps used by present-day QBF solvers. Recently, it has been shown that \qrat can simulate both the expansion-based proof system \ecalculus~\cite{ecalculus_simul} and CDCL-based proof system \lqrc~\cite{ldqres_simul}. Since \qrat allows resolution steps with universal variables as pivot, it simulates \qurc as well~\cite{ldqres_simul}. It is also known that \qrat is strictly stronger than \ecalculus, \lqrc and \qurc~\cite[Figure 2]{ldqres_simul}.

In this short paper, we extend the importance of \qrat among QBF proof systems by showing that \qrat even polynomially simulates \mergeres. We also show that refuting the SquaredEquality formulas in \qrat is easy. Thus the semantic structure of these formulas which makes it harder to refute in all other proof systems is not a restriction for \qrat. We explain these contributions in the following subsection.



\subsection{Our contributions}
\begin{alphaenumerate}
    \item {\bf Short QRAT refutation of SquaredEquality formulas:} SquaredEquality formulas, a variant of equality formulas~\cite{BeyersdorffBH19}, have been defined in~\cite{mres_paper} to show that \mergeres is strictly stronger than reductionless \lqrc~\cite{BjornerJK15,PeitlSS19a}. The original equality formulas which are hard for \qrc but easy for \lqrc have been extended in a way that prohibits the resolution step in reductionless \lqrc but not in \mergeres. 
    \vspace{0.1cm}\\
    In this paper, we show that the SquaredEquality formulas have a short refutation in \qrat (Theorem~\ref{short_proof}). Also, since the original equality formulas are easy for \lqrc and \qrat can simulate \lqrc, the formulas are easy for \qrat as well. 
    \vspace{0.1cm}\\
    Note that, all other known families of false QBFs used to establish the incomparability results among QBF proof systems are easy for \qrat: KBKF~\cite[Theorem 3.2]{KBKF95} formulas are easy for \qurc~\cite[Example 5.5]{qures_simul} hence they are easy for \qrat. Similarly, QPARITY~\cite{BCJ15} formulas are easy for \ecalculus~\cite[Lemma 15]{BCJ15} and hence easy for \qrat. Variants of these formulas were used to show the incomparability results among proof systems known to be simulated by \qrat. Hence these formulas are also easy for \qrat.\
    Thus the presented short \qrat refutation of the SquaredEquality formulas makes this formulas also easy for \qrat.
    
    \vspace{0.1cm}
    \item {\bf \qrat polynomially simulates \mergeres:} It has been shown that \mergeres can simulate reductionless \lqrc~\cite{mres_paper}. However, none of the proof systems \qrc, \qurc, \qcp, \ecalculus, \irc and reductionless \lqrc are capable of simulating \mergeres. The difficulty for these proof systems lies in simulating the axiom steps of \mergeres. To be precise, \mergeres gets rid of all the universal variables from the input clauses just by maintaining the partial strategies for them. On the other hand, the above mentioned proof systems have different and restricted rules for handling the universal variables. For example, \qrc and \qurc use universal reduction(UR) rule, which allows dropping a universal variable only if it is not blocked. Similarly, expansion-based proof systems like \ecalculus and \irc handle the universal variables by introducing the annotated existential variables only. 
    \vspace{0.1cm}\\
    In this paper, we show how \qrat handles this hurdle and polynomially simulates \mergeres. We show this by proving that the downloaded clauses in \mergeres proofs are all Asymmetric Tautology (AT) (Definition \ref{asymmetric-tautology}) with respect to the input QBF (Lemma \ref{existential_clause}). Therefore, they can easily be added in \qrat. Since the resolution step can be easily simulated by \qrat (Observation~\ref{obs:qrat-simulate-res}), the remaining resolution steps in \mergeres refutation can also be simulated (Theorem \ref{simul}).
    
    \vspace{0.1cm}
    \item {\bf Emphasizing the importance of \qrat among QBF proof systems:} \qrat has been shown to simulate varieties of QBF solving approaches. That is on one hand, \qrat can simulate the expansion-based system \ecalculus and on the other, it can simulate the powerful CDCL-based system \lqrc. Since \mergeres is based on an entirely different QBF-solving approach; by showing that \qrat can polynomially simulate \mergeres, the paper extends the importance of \qrat system.  
    
    \qrat is a possible candidate for the universal checking format which can verify all existing QBF-solving techniques~\cite{chew21}. Our simulation result is a small step in this direction. (The other possible candidate is the extended Frege for QBFs, denoted as, \qef~\cite[Conjecture 1]{chew21}).
    For the simulation order and incomparabilities involving \qrat and several QBF proof systems, refer Figure~\ref{fig1}.
    \end{alphaenumerate}
    \vspace{0.25cm}
    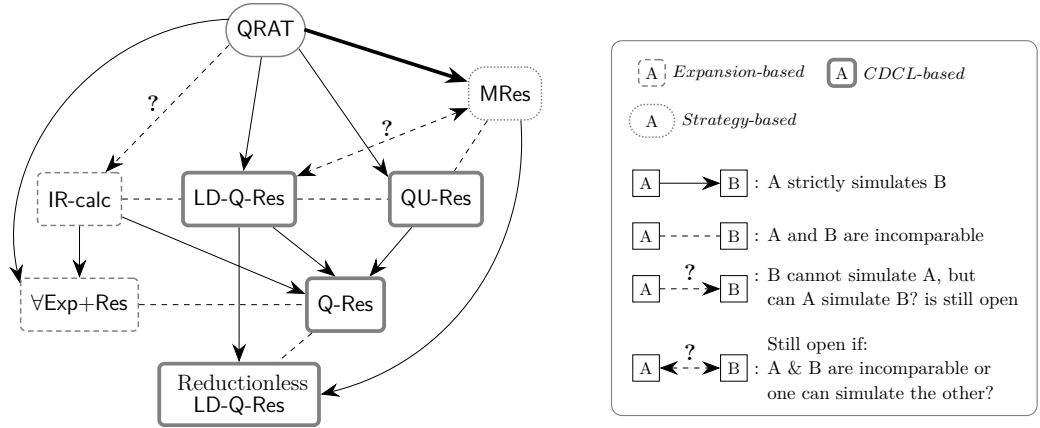
\begin{figure}[!h] 
    \centering
  \begin{tikzpicture}[scale=0.7]
  \draw[white, ultra thin,rounded corners=2mm] (-6.6,- 2.6) rectangle (13,6);
  \pgflowlevelsynccm
  \node[expcalcn] (eres) at (-5,0) {\ecalculus};
  \node[expcalcn] (irc) at (-5,2) {\irc};
  \node[Style2] (qurc) at (1.7,2) {\qurc};
  \node[Style2] (lq) at (-2,2) {\lqrc};
  \node[Style2] (red-lq) at (-2,- 1.7) {\textnormal{ Reductionless} \\ \lqrc};
  \node[Style2] (qrc) at (0,0) {\qrc};
  \node[calcn] (qrat) at (- 1.5,5.2) {\qrat};
  \node[expcalcn2] (mres) at (3,4) {\mergeres};
  \draw[black, -{Stealth[scale=2]},shorten <= -2pt] (qrat.165) to [bend right=50] (eres.160);
  \draw[black, -{Stealth[scale=2]}] (irc) to  (eres);
  \draw[black, -{Stealth[scale=2]}] (qrat) to  (lq);
  \draw[black, -{Stealth[scale=2]}, shorten <= -5.5pt] (qrat.325) to  (qurc.155);
  \draw[black, -{Stealth[scale=2]}] (lq.320) to  (qrc.110);
  \draw[black, -{Stealth[scale=2]}] (qurc) to  (qrc);
  \draw[black, -{Stealth[scale=2]}] (irc) to  (qrc.165);
  \draw[black, -{Stealth[scale=2]}] (lq) to  (red-lq);
  \draw[black, -{Stealth[scale=2]}] (mres.300) to [bend left=40] (red-lq.east);
  \draw[black, -{Stealth},line width=0.7mm,shorten <= -1pt] (qrat.east) to  (mres);
  \draw[black, dashed , -] (eres) to  (qrc);
  \draw[black, dashed , -] (irc) to  (lq);
  \draw[black, dashed , -] (lq) to  (qurc);
  \draw[black, dashed , -] (mres) to  (qurc);
  \draw[black, dashed , -] (qrc) to  (red-lq);
  \draw[black,dashed, -{Stealth[scale=2]},  shorten <= -3pt] (qrat.205) -- (irc)node[pos=.5,above=4pt, left] {\Large\textbf{?}};
  \draw[black,dashed, {Stealth[scale=2]}-{Stealth[scale=2]}] (mres) -- (lq)node[pos=.4,above=4pt, left] {\Large\textbf{?}};
  
  \draw[gray,semithick,rounded corners=2mm] (5, 5) rectangle (13,- 2.1);
  \matrix [matrix of math nodes,row sep=3mm,column sep=3mm, nodes={font=\small},below right] at (5.2,4.8) 
{
\node [expcalcn,minimum height=.5cm, minimum width=.5cm, inner sep=.1cm, font=\small, label=right:Expansion {\text - } based] {\textnormal{A}}; & \node [Style2,minimum height=.5cm, minimum width=.5cm, inner sep=.1cm, font=\small,label=right:CDCL{\text - } based] {\textnormal{A}}; \\
\node [expcalcn2,minimum height=.5cm, minimum width=.5cm, inner sep=.2cm, font=\small, label=right:Strategy {\text - } based] {\hspace{0.1cm}\textnormal{A}\hspace{0.1cm}};\\
};
\node[rectangle,draw=black,minimum height=.5cm, minimum width=.5cm, inner sep=.1cm, font=\small] (a) at (5.65,2.3) {A};
\node[rectangle,draw=black,minimum height=.5cm, minimum width=.5cm, inner sep=.1cm, font=\small, label=right:{: A strictly simulates B}] (b) at (7.3,2.3) {B};
\draw[black, -{Stealth[scale=2]}] (a) to (b);
\node[rectangle,draw=black,minimum height=.5cm, minimum width=.5cm, inner sep=.1cm, font=\small] (c) at (5.65,1.3) {A};
\node[rectangle,draw=black,minimum height=.5cm, minimum width=.5cm, inner sep=.1cm, font=\small, label=right:{: A and B are incomparable}] (d) at (7.3,1.3) {B};
\draw[black, dashed , -] (c) to (d);
\node[rectangle,draw=black,minimum height=.5cm, minimum width=.5cm, inner sep=.1cm, font=\small] (e) at (5.65,0.3) {A};
\node[draw=black, rectangle, minimum height=.5cm, minimum width=.5cm, inner sep=.1cm, font=\small, label={[align=left] right:: B cannot simulate A, but \\\hspace{0.25cm}can A simulate B? is still open}] (f) at (7.3,0.3) {B};
\draw[black,black,dashed, -{Stealth[scale=2]}] (e) -- (f)node[pos=.5,above=1pt] {\Large\textbf{?}};
\node[rectangle,draw=black,minimum height=.5cm, minimum width=.5cm, inner sep=.1cm, font=\small] (g) at (5.65,-1.2) {A};
\node[rectangle,draw=black, minimum height=.5cm, minimum width=.5cm, inner sep=.1cm, font=\small, label={[align=left]0: \hspace{0.25cm}Still open if:\\: A \& B are incomparable or\\ \hspace{0.25cm}one can simulate the other?}] (h) at (7.3,-1.2) {B};
\draw[black,dashed, {Stealth[scale=2]}-{Stealth[scale=2]}] (g) -- (h)node[pos=.5,above=1pt] {\Large\textbf{?}};
\end{tikzpicture}
\vspace{0.2cm}
  \caption{Simulation order of QBF proof systems, with our new result shown in bold. \mergeres simulation of reductionless \lqrc is shown in \cite{mres_paper}. \qrat simulation of \ecalculus, \lqrc and \qurc are shown in \cite{ecalculus_simul,ldqres_simul,ldqres_simul} respectively. The incomparability result of reductionless \lqrc and \qrc was shown in \cite{PeitlSS19a}. For other known relations refer~\cite[Figure 1]{BCJ15}}\label{fig1} 
\end{figure}
\vspace{-0.7cm}
\subsection{Organisation of the paper}
\vspace{-0.1cm}
In Section~\ref{sec:Prerequisites}, we denote all important notations and preliminaries used in the paper. We define \mergeres in Section~\ref{2.1} and \qrat in Section~\ref{2.2}.
In Section~\ref{3}, we define the SquaredEquality formulas and give a short \qrat proof for the same in (Theorem~\ref{short_proof}).
We prove the \qrat simulation of \mergeres in Section~\ref{4}. Finally, we conclude and discuss further open problems in Section~\ref{5}. 
\vspace{0.5cm}
\section{Notations and Prerequisites}\label{sec:Prerequisites}
A clause $C$ is a disjunction of literals and a conjunctive normal form (CNF) $F$ is a conjunction of clauses. A clause $D$ is a sub-clause of $C$ if every literal of $D$ are also a literal of $C$.
A unit clause is a clause with only one literal in it. We denote the empty clause by $\bot$. vars($C$) is a set of all variables in $C$ and $var(F) = \cup_{C\in F}~vars(C)$. $\overline{C}$ for a clause $C$, is a conjunction of negation of literals in $C$.
\vspace{0.2cm}\\
A \textbf{proof system}~\cite{CookR79} for a non-empty language $L \subseteq \{ 0, 1\}^*$ is a
polynomial time computable function $f : \{ 0, 1\}^* \rightarrow \{ 0, 1\}^*$ such that Range($f$) = $L$. For string $x \in L$, we say a string $w \in \{ 0, 1\}^*$ is an $f$-proof of $x$ if $f(w)$ = $x$. 
A proof system $f$ for $L$ is complete iff for every $x \in L$ we have a corresponding $f$-proof for $x$. A proof system $f$ for $L$ is sound iff the existence of an $f$-proof for $x$ implies that $x \in L$. 

A proof system $f$ p-simulates (polynomially simulates) another proof system $g$ (i.e., $f \leq_p g$) if both prove the same language $L$ and every $g$-proof of input $x \in L$ can be translated into an $f$-proof for the same input in time that is polynomial w.r.t size of the $g$-proof. Otherwise, we say that $f$ do not simulate $g$ ($f \not \leq_p g$). We say that a proof system $f$ strictly simulates a proof system $g$ if $f \leq_p g$ but $g \not \leq_p f$. 
Proof systems $f$ and $g$ are said to be incomparable, if none of them can simulate the other. $f$ and $g$ proof systems are said to be p-equivalent if both $f \leq_p g$ and $g \leq_p f$ hold.\vspace{0.2cm}\\
Proof systems for $L = UNSAT/TAUT$ are called propositional proof systems. For example, the resolution proof system is defined as follows:
\vspace{-0.2cm}
\begin{definition}\textbf{Resolution proof system:} \label{resolution} Resolution proof system~\cite{Bla37,Rob65} is the most studied propositional proof system. The lines in this proof system are clauses. Given a CNF formula $F$, it can derive new clauses using the following inference rule, also known as the resolution rule: $\frac{C \vee x \hspace{5mm} D \vee \neg x}{C \vee D},$ where $C$ and $D$ are clauses and $x$ is the pivot variable being resolved. The clause $C \vee D$ is called the resolvent. For the rest of this paper we denote this step as Res($(C \lor x), (D \lor \overline{x}), x$).
\end{definition} 
\vspace{-0.2cm}
Proof systems for $L = \text{FQBFs/ TQBFs}$ are said to be QBF proof systems where, FQBFs (TQBFs) denote the set of all false (true) QBFs.
For example, \qrc, \qurc, etc. Before defining them we first define the QBFs.
\vspace{0.2cm}\\
{\bf Quantified Boolean formulas:} QBFs are an extension of the propositional Boolean formulas where each variable is quantified with one of $\{ \exists,\forall \}$, the symbols having their general semantic definition of existential and universal quantifier respectively.

In this paper, we assume that QBFs are in closed prenex form i.e., we consider the form $Q_1 X_1 . . .Q_k X_k$. $\phi(X)$ , where $X_i$ are pairwise disjoint sets of variables; $Q_i$ $\in$ \{$ \exists$, $ \forall$\} and $Q_i \neq Q_{i+1}$, and $\phi(X)$ is in CNF form over $X = X_1 \cup \dots \cup X_k$, called the matrix of the QBF. We denote QBFs as $Q.\phi$ in this paper, where $Q$ is the quantifier prefix.   

If a variable $x$ is in the set $X_i$, we say that $x$ is at level $i$ and write $lv(x)=i$. Note that the quantifier of a literal is the quantifier of the corresponding variable.
Given a literal $\ell$ with quantifier $Q_i$ and a literal $k$ with quantifier $Q_j$ , we say that $\ell$ occurs to left of $k$ and write $\ell  {\leq}_Q k$ if $i \leq j$. Likewise, we say that $\ell$ occurs to right of $k$ and write $\ell  {>}_Q k$ if $i > j$. 


An assignment tree of a QBF $Q.\phi$ is a complete binary tree of depth $|vars(\phi)|$ where each level is dedicated to a variable in the order of the prefix. If the variable is existential, then the level is said to be an existential level and each node of this level is said to be an existential node. Similarly we have universal levels and universal nodes.
If a level is dedicated for a variable (say $x$), every node in that level will have 2 outgoing edges for the level below marked with $x$ and $\overline{x}$ respectively except for the level corresponding to leaf nodes. 
Each node is labelled with either $1$ or $0$ eventually as follows:  
the path from root to leaf is a total assignment to all variables which if evaluate $\phi$ to $1$ (resp.~$0$), the corresponding leaf node is labelled with $1$ (resp.~$0$). Rest of the nodes are labelled from bottom up, where the existential nodes act as $OR$ gates and the universal nodes act as $AND$ gates. 
A \textbf{model} (resp.~\textbf{countermodel}) is a sub-tree of the assignment tree where every existential node has exactly one child (resp.~both children) and universal node has both the children (resp.~exactly one child), furthermore every node in this tree is marked with $1$ (resp.~$0$). True-QBFs have at least one model and false-QBFs have at least one countermodel.
\vspace{0.2cm}\\
{\bf QBFs as a game:} 
QBFs are often seen as a game between the universal and the existential player i.e. in the $i^{th}$ step the player $Q_i$ assigns values to the variables $X_i$. At the end, the existential (resp.~universal) player wins if substituting this total assignment of variables in $\phi$ evaluates to $1$ (resp.~0).

For a QBF $Q.\phi$, a \textbf{strategy} of universal (resp.~existential) player is a decision function that returns the assignment to all universal (resp.~existential) variables of $Q$, where the decision for each $u$ depends only on the variables to the left of it in the quantifier prefix $Q$. 

\textbf{Winning strategy} for a player is a strategy which makes this player win against every assignment of the other player.
The term `winning  strategy' is often used instead of model (in case of the existential player) and countermodel (in case of the universal player).
A QBF is false (true) iff there exists a winning strategy for the universal (existential) player~\cite{AroraBarak09}.


We say that a QBF proof system $f$ admits \textbf{strategy extraction} if from every $f$-proof ($f$-refutation) of a true (false) QBF $Q.\phi$ one can extract a winning strategy for the existential (universal) player efficiently w.r.t. the size of the $f$-proof ($f$-refutation). 

Strategy extraction is one of the most important lower bound techniques in QBF proof complexity. 
If a proof system $f$ admits strategy extraction, then every QBF with hard strategies must have long $f$-proofs.\vspace{0.2cm}\\
Now, let us define few important QBF-proof systems:\\
\textbf{\qrc:}~\cite{KBKF95} \qrc is the extension of the resolution proof system for QBFs. It has two rules namely resolution and universal reduction. The resolution rule is the same as defined before in (Definition~\ref{resolution}) i.e. Res($C_a,C_b,x$); the only restrictions being that the pivot variable $x$ should be an existential variable and that the resolvent clause should not be a tautology. 

The \textbf{Universal Reduction} (UR) of \qrc is the rule that allows dropping of universal literal $u$ from a clause $C$ in the QBF provided no existential literal $\ell \in C$ appears to the right of $u$ in the quantifier prefix. In this case, we say $u$ is not blocked in $C$, otherwise it is blocked. \\
\textbf{\qurc:}~\cite{qures_simul} \qurc is an extension of \qrc which allows resolution on universal variables as well.

\subsection{\mergeres proof system~\cite{mres_paper}}\label{2.1}
\mergeres is a proof system for false QBFs introduced in~\cite{mres_paper}.
We describe \mergeres briefly in this section, please refer to \cite{mres_paper} for its formal definition.

A \mergeres refutation of a QBF $Q.\phi$ consists of a sequence of lines where each line $L_i$ consists of a clause $C_i$ and a map (${M_i}^u$) for each universal variable $u \in Q$. At any given point in the proof, $C_i$ consists of only existential literals and ${M_i}^u$ gives the partial strategy of universal variable $u$ based on the existential variables which lie to the left of $u$ in the quantifier prefix $Q$.

The \textbf{Merge-maps} ${M_i}^u$ can be either directly $i \mapsto \{ u / \overline{u} / * \} $ or it can be of the form $i \mapsto (x,a,b)$ (read as ‘if $x = 0$ then goto $a$ else goto $b$’) where $x$ is an existential variable and $a,b < i$ are line indices from previous lines of the \mergeres proof. 

Merge-maps can be represented as graphs where the node labels are line indices and the edges are labelled by existential literals. For example, for the merge map rule $i \mapsto (x,a,b)$, we have an edge $i \rightarrow a$ in the graph with label $\overline{x}$ and an edge $i \rightarrow b$ with label $x$. 
Let $i$ be a line index which is a leaf node in the graph, then 
it also has an additional label from $\{ u, \overline{u}, * \}$ corresponding to the rule $i \mapsto \{ u / \overline{u} / * \} $.\vspace{0.2cm}\\
The following two properties can be easily checked on merge-maps:

\textbf{Isomorphism:} Two merge maps ${M_a}^u$ and ${M_b}^u$ are isomorphic (written  ${M_a}^u \simeq {M_b}^u$) if and only if there exists a bijection mapping from the line numbers of one to those of another when represented as graphs. In other words, two isomorphic merge maps represent the same strategy.

\textbf{Consistency:} Two merge maps ${M_a}^u$ and ${M_b}^u$ are consistent (written ${M_a}^u \bowtie {M_b}^u$) if and only if for every common line index (say $i$) in both maps, it holds that  ${M_a}^u (i) = {M_b}^u (i)$.\vspace{0.2cm}\\
The following two functions are defined on merge-maps:

\textbf{Select(${M_a}^u, {M_b}^u$):} It is defined only when ${M_a}^u \simeq {M_b}^u$ or when either of them is trivial (i.e. $a \mapsto *$ or/\& $b \mapsto *$ ). In such a case, it returns ${M_a}^u$ (if ${M_a}^u$ is not trivial), otherwise returns ${M_b}^u$.

\textbf{Merge(${M_a}^u, {M_b}^u, n, x$):} It is defined only when ${M_a}^u \bowtie {M_b}^u$ and $x$ is an existential variable and $n$ is a new line index strictly greater than both $a$ and $b$. In such a case, it returns a new merge-map which merges the same indice nodes into one node and adds a new node with the rule $n \mapsto$ ($x,a,b$). Note that the indices present only in one of the input maps (i.e. not-common) are retained in the new merged map as they were.
\vspace{0.2cm}\\
Now, we are ready to define the \mergeres proof system:
\vspace{-0.3cm}
\begin{definition}[\textbf{\mergeres proof system}~\cite{mres_paper}]\label{mres} 
Let $\Phi$ := $Q. \phi$ be a QBF with existential variables $X$ and universal variables $U$. \mergeres derivation of $\Phi$ is a sequence $\pi := L_1, . . . , L_k$ of lines $L_i := (C_i , \{ {M_i}^u : u \in U \} )$ derived by one of the following steps: 
\begin{alphaenumerate}

\item \textbf{Axiom}. There exists a clause in $C \in \phi$ such that $C_i$ is the existential sub-clause of $C$, and, for each $u \in U$, ${M_i}^u$ is
the rule $i \mapsto$
the falsifying $u$-literal for $C$, if $u \notin C$ add the trivial rule $i \mapsto *$; {\bf or},
\item \textbf{Resolution}. There exist integers $a, b < i$ and an existential pivot $x \in X$ such that $C_i$ = Res($C_a, C_b, x$), where one of the following must hold for every $u \in U$:
\begin{description}
\item[(i)] ${M_i}^u$ = select($ {M_a}^u , {M_b}^u)$ if defined; \textbf{or},
\item[(ii)] $x <_Q u$ and ${M_i}^u$ = merge($ {M_a}^u, {M_b}^u), i, x )$.\vspace{0.1cm}
\end{description}
\end{alphaenumerate}
The final line $L_k$ is the conclusion of $\pi$, and $\pi$ is a refutation of $\Phi$ iff $C_k = \bot$. In this case observe that $\{{M_k}^u : u \in U\}$ is a winning strategy for the universal player. 
\end{definition}
It's known that \mergeres is sound and complete for false QBFs~\cite[Section 4.3]{mres_paper}.
We outline \qrat proof system in the next section.
\subsection{\qrat proof system~\cite{qrat_paper}}\label{2.2}
The \qrat proof system was introduced to capture the state-of-the-art techniques used in current day QBF-solvers~\cite{qrat_paper}. We give a brief summary of its rules. We need the following definitions:\vspace{-0.2cm}
\begin{definition}
For a CNF formula $F$, \textbf{unit propagation} (represented by $\vdash_1$ or unit-propagation$(F)$) simplifies $F$ on unit clauses; that is for every unit clause $(\ell) \in F$, it assigns $\ell$ to 1 in all clauses of $F$.  i.e. removes all clauses that contain the literal $\ell$ from the set $F$ and drops the literal $\overline{\ell}$ from all clauses in $F$. It keeps repeating this until no unit clause is left or an empty clause is derived.\vspace{-0.2cm}
\end{definition}
\begin{definition}[Outer resolvent~\cite{qrat_paper}]\label{outer-resolvent}
Given two clauses $(C \lor \ell), (D \lor \overline{\ell})$ of a QBF $Q.\phi$, the \textbf{Outer Resolvent} OR($Q$,$C$,$D$,$\ell$) is the clause consisting of all literals in $C$ together with those literals of $D$ that occur to the left of $\ell$, i.e. $C \cup \{ k~|~k \in D , k \leq_Q \ell \}$.\vspace{-0.2cm}
\end{definition}
\begin{definition}[Asymmetric Tautology (AT)]\label{asymmetric-tautology}
Clause $C$ is an AT w.r.t. to CNF $\phi$ iff $\phi$ $\vdash_1  C$. Alternatively, $C$ is an AT w.r.t. $\phi$ iff $\bot \in$ unit-propagation$(\phi \land \overline{C})$.
A clause $C$ is an AT w.r.t. a QBF $Q.\phi$ if it is an AT w.r.t.~$\phi$.
\end{definition}

\noindent
\textbf{\qrat-clause \& \qrat-literal:} A clause $C\vee \ell$ is \qrat-clause w.r.t. a QBF $Q.\phi$ if for every $D\vee \overline{\ell} \in \phi$ the OR($Q$,$C$,$D$,$\ell$) is an AT w.r.t. $\phi$. We say that $\ell$ is the  \qrat-literal in $C$.

If a clause $C$ contains an existential \qrat-literal, it has been shown in \cite{qrat_paper} that $C$ can be removed (called QRATE rule) or added (called QRATA rule) without affecting the satisfiability of the QBF. Also, if a clause $C$ contains a universal \qrat-literal $\ell$, then dropping $\ell$ from $C$ (called QRATU rule) is also a satisfiability preserving step. Note that a clause $C$, which is an AT w.r.t.~a QBF $\Phi=Q.\phi$ is also a \qrat-clause on any literal belonging to $C$ w.r.t. $\Phi$~\cite{qrat_paper}. Additionally, \qrat allows elimination of any clause at any point in the proof~\cite{ecalculus_simul}.\vspace{0.2cm}\\
The remaining rule in \qrat is EUR; to define it we need the following:

\textbf{Extended inner clause (EIC):} For a QBF $Q.\phi$ where $C \in \phi$ and $\ell \in C$, EIC($Q, C, \ell$) is the final clause obtained when repeatedly performing the following: for every existential literal $k \in C$ which is to the right of $\ell$ in $Q$, extend $C$ by all the right literals of $\ell$ in the clauses $D \in \phi$ with $\overline{k} \in D$ (also include $\overline{\ell}$ in $C$ if $\overline{\ell}$ also $\in$ such $D$).

\textbf{Extended Universal Reduction (EUR):} Given a QBF $Q.\phi$, for a clause $C \in \phi$ with a universal literal $\ell \in C$ such that $\overline{\ell} \notin$ EIC($Q,C,\ell$), the literal $\ell$ can be dropped from $C$ under the Extended Universal Reduction (EUR) rule of \qrat. 
\vspace{0.2cm}\\
\noindent
Given a QBF $\Phi=Q.\phi$, a sequence of clauses is called a \qrat refutation of $\Phi$, if they are derived using the above mentioned rules and the last clause in the sequence is $\bot$.
\vspace{0.2cm}\\
\noindent
In this paper we show that even the full power of QRAT is not required to prove the SquaredEquality formulas or to simulate \mergeres. This restricted variant is referred to as \qrat(UR) in the literature~\cite{ChewC20} which allows all the \qrat rules but uses universal reduction (UR) instead of the powerful EUR rule. In fact, if we allow the definition that clauses with only universal literals are $\bot$; then \qrat simulation of \mergeres does not even require the {\bf UR} rule, only the QRATU rule is sufficient.

\vspace{0.2cm}
\noindent
Before moving on to the next section, we state the following observation which is useful for the upcoming proofs.
\vspace{-0.2cm}
\begin{observation}[\cite{qrat_paper}]\label{obs:qrat-simulate-res}
Consider a resolution step Res($(C\lor x),(D \lor \overline{x}),x$). \qrat can easily simulate the resolution steps by directly adding the resolvent clause to the QBF $Q.\phi$ as it is an AT w.r.t. $\phi$. This is true as unit propagation of the resolvent $(C \lor D)$ in $\phi$ derives $(x \wedge \overline{x}) = \bot$. 
Note that the above argument is valid for universal pivot variables as well. This implies, \qrat can simulate \qurc. 

\end{observation}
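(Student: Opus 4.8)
The plan is to verify directly that the resolvent $C \lor D$ is an Asymmetric Tautology with respect to the matrix $\phi$ — which at this stage already contains both parent clauses $C \lor x$ and $D \lor \overline{x}$ — and then to invoke the fact recalled just above the statement that a clause which is an AT w.r.t.\ a QBF is a \qrat-clause on (any of) its literals, and hence can be introduced into the QBF by QRATA. By Definition~\ref{asymmetric-tautology}, proving that $C \lor D$ is an AT amounts to running unit propagation on $\phi \land \overline{C \lor D}$ and deriving $\bot$. Here $\overline{C \lor D}$ is the set of unit clauses $\{(\overline{c}) : c \in C\} \cup \{(\overline{d}) : d \in D\}$, so the first round of propagation falsifies every literal of $C$ and of $D$.

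I would then trace how this assignment acts on the two parents. In $C \lor x$ every literal of $C$ has been falsified, so the clause collapses to the unit $(x)$ and propagation sets $x = 1$; symmetrically $D \lor \overline{x}$ collapses to $(\overline{x})$ and propagation sets $x = 0$. The two forced assignments to $x$ conflict, so propagation produces the empty clause. This is exactly the condition $\bot \in$ unit-propagation$(\phi \land \overline{C \lor D})$, so $C \lor D$ is an AT w.r.t.\ $\phi$ and can be added by QRATA; this supplies the resolvent in the \qrat derivation, which is all that is needed to replay a single resolution inference.

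For the closing remark, I would note that the argument never used whether the pivot is existential or universal: in either case $C \lor x$ and $D \lor \overline{x}$ both lie in $\phi$, and the same two-step propagation forces $x$ both to $1$ and to $0$, so the resolvent of a universal-pivot resolution step (as in \qurc) is likewise an AT and is addable by QRATA. Combined with the fact that the ordinary universal reduction of \qurc is already available as a \qrat rule, this lets one replay any \qurc refutation line by line in \qrat. I do not expect a genuine obstacle here; the only point deserving a word of care is that unit propagation is monotone and confluent, so the presence of the remaining clauses of $\phi$ neither blocks the derivation of the conflicting units on $x$ nor is it needed for it.
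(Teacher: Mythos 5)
Your proposal is correct and follows essentially the same argument as the paper, which justifies the observation inline by noting that unit propagation of $\phi \land \overline{C \lor D}$ reduces the two parent clauses to the conflicting units $(x)$ and $(\overline{x})$, and that this argument is independent of the pivot's quantifier. Your added remarks on QRATA being the rule that introduces an AT clause and on the availability of universal reduction for the \qurc simulation are consistent with the surrounding text of the paper.
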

\section{SquaredEquality Formulas~\cite{mres_paper}}\label{3}
In \cite{hard_mres}, it was stated that, there exists a family of false QBFs, the SquaredEquality formulas, with short proofs in \mergeres but requiring exponential size in \qrc, \qurc, \qcp, \ecalculus, \irc and reductionless \lqrc. 
In the next subsection, we show that these formulas are easy for \qrat proof system as well. We next present its definition.
\begin{definition}[SquaredEquality Formulas~\cite{mres_paper}]\label{def:squared} The squared equality family is the QBF family whose $n^{th}$ instance ${EQ}^2 (n) := Q(n).{eq}^2(n)$, it has the prefix\vspace{0.15cm}\\
\hspace*{0.4cm}
$Q(n) := \exists \{ x_1, y_1, . . . , x_n, y_n \} \forall \{ u_1, v_1, . . . , u_n, v_n \} \exists \{ t_{i,j} : i , j \in [n] \},$\vspace{0.15cm}\\
and the matrix ${eq}^2 (n)$ consisting of the clauses: \hspace{2.2cm} Labels:\vspace{0.15cm}\\
\hspace*{0.3cm}
\begin{minipage}{0.98\textwidth}
$\{ x_i, y_j, u_i, v_j, t_{i,j} \} , \{ x_i, \overline{y_j}, u_i, \overline{v_j}, t_{i,j} \} , ~$ for $i,j \in [n]$, \hspace{1.2cm} $C_{i,j} , C_{i,j}' $ \vspace{0.15cm}\\ 
$\{ \overline{x_i}, y_j, \overline{u_i}, v_j, t_{i,j} \} , \{ \overline{x_i}, \overline{y_j}, \overline{u_i}, \overline{v_j}, t_{i,j} \} , ~$ for $i,j \in [n]$, \hspace{1.2cm} $D_{i,j} ,  D_{i,j}'$\vspace{0.15cm}
\end{minipage}
\hspace*{2.1cm}$(\overline{t_{i,j}} : i,j \in [n])$. \hspace{5.4cm} $T$
\end{definition}
\subsection{Short \qrat refutations of ${EQ}^2 (n)$}\label{3.1}
In this section we give the first short \qrat refutation for SquaredEquality formulas.
\begin{theorem}\label{short_proof}
The SqauredEquality formulas have $\mathcal{O}(n^2)$-size \qrat refutations.
\end{theorem}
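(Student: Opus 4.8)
The plan is to build an $\mathcal{O}(n^2)$-size \qrat refutation of ${EQ}^2(n)$ directly, exploiting the fact that every $t_{i,j}$ appears only positively in the clauses $C_{i,j},C_{i,j}',D_{i,j},D_{i,j}'$ and only negatively in the unit clause $(\overline{t_{i,j}})\in T$. First I would use unit propagation on the $n^2$ units in $T$: this removes the literal $t_{i,j}$ from all four clauses $C_{i,j},C_{i,j}',D_{i,j},D_{i,j}'$. Formally, rather than ``propagating'', I derive for each $i,j$ the four shortened clauses $\{x_i,y_j,u_i,v_j\}$, $\{x_i,\overline{y_j},u_i,\overline{v_j}\}$, $\{\overline{x_i},y_j,\overline{u_i},v_j\}$, $\{\overline{x_i},\overline{y_j},\overline{u_i},\overline{v_j}\}$; each is an AT w.r.t.\ the matrix (unit propagation of its negation together with the matrix uses the unit $(\overline{t_{i,j}})$ to fix $t_{i,j}=0$, then the original clause $C_{i,j}$ etc.\ becomes the empty clause), so each can be added via the AT/QRATA rule. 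After this I may discard the $t_{i,j}$-clauses, since \qrat permits deletion of any clause.

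Next I would eliminate the universal variables $v_j$ and $u_i$. Observe that in the four shortened clauses the pair $(u_i,v_j)$ appears in exactly the polarity patterns $(u_i,v_j),(u_i,\overline{v_j}),(\overline{u_i},v_j),(\overline{u_i},\overline{v_j})$, correlated with $(x_i,y_j)$ patterns in the same way. The key step is to check that $v_j$ (respectively $u_i$) is a \qrat-literal, or even just an UR/EUR-eliminable literal, in each of these clauses with respect to the current formula. For instance, in $\{x_i,y_j,u_i,v_j\}$ the literal $v_j$ is resolvable against $\overline{v_j}$ only in clauses of the form $\{x_{i'},\overline{y_j},u_{i'},\overline{v_j}\}$ and $\{\overline{x_{i'}},\overline{y_j},\overline{u_{i'}},\overline{v_j}\}$, and the resulting outer resolvents will be AT because the matching $x,y$-literals force a contradiction through the remaining clauses; I would carry out this check case by case. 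Since there are $O(n)$ universal variables and $O(n^2)$ clauses, this contributes $O(n^2)$ steps. Once all the $u_i,v_j$ are dropped we are left with the purely-existential clauses $\{x_i,y_j\},\{x_i,\overline{y_j}\},\{\overline{x_i},y_j\},\{\overline{x_i},\overline{y_j}\}$ for each $i,j$, which are propositionally inconsistent already for a single pair $(i,j)$: resolving on $y_j$ gives $(x_i)$ and $(\overline{x_i})$, then resolving on $x_i$ gives $\bot$. By Observation~\ref{obs:qrat-simulate-res} each resolution step is simulated in \qrat by adding an AT clause, so this final phase is $O(1)$ steps (or $O(n^2)$ if one prefers to keep everything uniform). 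Totalling the phases gives $\mathcal{O}(n^2)$ lines.

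The main obstacle I anticipate is the middle phase: verifying that the universal literals $u_i,v_j$ can legitimately be removed. The naive universal reduction rule of \qrc is blocked here precisely because the $t_{i,j}$ variables are innermost and existential — which is the whole point of the SquaredEquality construction — but after the unit-propagation phase those blocking variables are gone, so plain UR should already suffice; still, one must confirm that no newly-short clause reintroduces a dependency, and handle the $v_j$-then-$u_i$ ordering correctly (eliminate the inner universal $v_j$ first, then $u_i$). If plain UR turns out to be subtly blocked by the interaction across different indices $i,j$, the fallback is to verify the QRATU condition directly, computing the outer resolvents OR$(Q,C,D,v_j)$ and checking each is an AT; I expect these outer resolvents to be short (the literals to the left of $v_j$ are just the $x,y$ and $u$ literals) and their AT-ness to follow from resolving the conflicting $x$- or $y$-literal against the appropriate companion clause. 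Once this is nailed down, the rest is routine resolution bookkeeping.
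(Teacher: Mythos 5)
There is a genuine gap, and it stems from a misreading of the formula: in Definition~\ref{def:squared}, $T$ is a \emph{single} clause $(\overline{t_{i,j}} : i,j\in[n])$ of width $n^2$, not a collection of $n^2$ unit clauses. (This is essential to the construction --- if the $\overline{t_{i,j}}$ were units, the formulas would already be easy for plain \qrc and could not separate \mergeres from the other systems.) Consequently your first phase fails: to certify $\{x_i,y_j,u_i,v_j\}$ as an AT you unit-propagate its negation together with the matrix, which turns $C_{i,j}$ into the unit $(t_{i,j})$ and then merely shortens $T$ to a clause with $n^2-1$ remaining literals; no empty clause is derived, so the shortened clauses are not ATs and cannot be added by QRATA. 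Everything downstream of this step (dropping the universals from the already-shortened clauses, then refuting $\{x_i,y_j\},\{x_i,\overline{y_j}\},\{\overline{x_i},y_j\},\{\overline{x_i},\overline{y_j}\}$ in $O(1)$ resolutions) therefore rests on clauses you have not legitimately derived.

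The salvageable part is your middle phase: the observation that the outer resolvents on $u_i$ (resp.\ $v_j$) always contain both $x_i$ and $\overline{x_i}$ (resp.\ $y_j$ and $\overline{y_j}$), hence are tautologies and thus ATs, is exactly the paper's Lemma~\ref{lemma_qratu}, and it applies just as well to the \emph{original} clauses with $t_{i,j}$ still present. The correct order is therefore the reverse of yours: first drop all $u_i,v_j$ by QRATU to obtain $\{x_i,y_j,t_{i,j}\}$, $\{x_i,\overline{y_j},t_{i,j}\}$, $\{\overline{x_i},y_j,t_{i,j}\}$, $\{\overline{x_i},\overline{y_j},t_{i,j}\}$; then three resolutions per pair $(i,j)$ yield the unit $(t_{i,j})$; finally resolve these $n^2$ units one by one against the single wide clause $T$ to reach $\bot$. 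This last stage is where the $\mathcal{O}(n^2)$ count genuinely comes from, and it is the step your proposal skips entirely because you believed $T$ was already decomposed into units.
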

\begin{proof}
Let $n \in \mathbb{N}$,where $\mathbb{N}$ is the set of all natural numbers. We construct a refutation in 3 stages. In the first stage, we drop all the universal variables in the formulas for the reason that they are QRAT-literals. In the second stage we derive unit clauses of all $t_{i,j}$'s using 3 resolutions steps each. In the last stage we successively resolve these unit clauses with the clause $T$ and derive an empty clause.
\vspace{0.1cm}
\begin{alphaenumerate}
\item \textbf{Stage 1:} we prove the following lemma first.\vspace{0.1cm}
\begin{lemma}\label{lemma_qratu}
All universal literals are QRAT-literals in SquaredEquality formulas and can be dropped by QRATU rule.
\end{lemma}
\vspace*{-0.45cm}
\begin{proof}
Observe that in all the $4$ type of clauses (i.e. $C_{i,j} ,  C_{i,j}' , D_{i,j} ,  D_{i,j}' $) the existential literal $x_i$ is always in the same clause as the universal literal $u_i$ and the literal $\overline{x_i}$ is always in the same clause as the literal $\overline{u_i}$. Same is with the existential variable $y_j$ and universal variable $v_j$. Moreover $x_i,y_j$ are always on the left of $u_i,v_j$ in the quantifier prefix.\vspace{0.1cm}\\
Consider the $C_{i,j}$ type of clauses, they contain the universal literal $u_i$. The outer resolvents of these clauses can be with either $D_{i,j}$ or $D_{i,j}'$ which contain the literal $\overline{u_i}$. All these outer resolvents will have both $x_i$ and $\overline{x_i}$, i.e. they are a tautology. Hence all the outer resolvents are ATs, which makes $u_i$ a QRAT-literal in $C_{i,j}$. A similar argument can be made for each one of the $4$ clauses as the primary clause. Thus all the $u_i$ variables can be dropped from the formulas.\vspace{0.1cm}\\
Now, consider again the $C_{i,j}$ type of clauses, they contain the universal literal $v_j$. The outer resolvents of these clauses can be with either $C_{i,j}'$ or $D_{i,j}'$ which contain the literal $\overline{v_j}$. All these outer resolvents will have both $y_j$ and $\overline{y_j}$, i.e. they are a tautology. Hence they are all ATs, that makes $v_j$ a QRAT-literal in $C_{i,j}$. Similar arguments can be made for each one of the 4 clauses. So all the $v_j$ variables can be dropped from the formulas.
\end{proof}
\vspace{-0.4cm}
Now using Lemma \ref{lemma_qratu}, we drop all universal variables in $\mathcal{O}(n^2)$ and obtain the following clauses: \hspace{7.2cm} Labels:\vspace{0.2cm}\\
\hspace*{1cm}
\begin{minipage}{0.98\textwidth}
$\{ x_i, y_j, t_{i,j} \} , \{ x_i, \overline{y_j}, t_{i,j} \} , ~$ for $i,j \in [n]$, \hspace{1.2cm} $ C_{i,j}'',  C_{i,j}''' $ \vspace{0.15cm}\\ 
$\{ \overline{x_i}, y_j, t_{i,j} \} , \{ \overline{x_i}, \overline{y_j}, t_{i,j} \} , ~$ for $i,j \in [n]$, \hspace{1.2cm} $ D_{i,j}'' ,  D_{i,j}'''$\\
\end{minipage}
\item \textbf{Stage 2:} For every ${i,j} \in [n]$, we use 3 resolution rules on the corresponding clauses  $C_{i,j}'' ,~  C_{i,j}''' ,~  D_{i,j}''~\&~ D_{i,j}''' $ to obtain the unit clause $(t_{i,j})$ as  follows:
\vspace{0.15cm}\\
\hspace*{1cm}
\begin{minipage}{0.98\textwidth}
$P_{i,j} = Res( C_{i,j}'' ,  C_{i,j}''', y_j) = \{ x_i , t_{i.j} \}$\vspace{0.15cm}\\
$Q_{i,j} = Res( D_{i,j}'' ,  D_{i,j}''', y_j) = \{ \overline{x_i} , t_{i.j} \}$\vspace{0.15cm}\\
$R_{i,j} = Res(P_{i,j},Q_{i,j}, x_i) = \{ t_{i,j} \}$ \vspace{0.15cm}
\end{minipage}
Resolution clauses are AT in \qrat (Observation~\ref{obs:qrat-simulate-res}), so the above clauses ($P_{i,j}, Q_{i,j}, R_{i,j}$ in this order) can be added directly. This stage can be done in $\mathcal{O}(n^2)$ resolution steps.

\vspace{0.15cm}
\item \textbf{Stage 3:} For every $i,j \in [n]$ we have already derived all the $n^2$ unit clauses $R_{i,j}$'s, using these clauses along with the input clause $T$ we may derive the empty clause $\bot$ in $\mathcal{O}(n^2)$ steps. 
\end{alphaenumerate}
\vspace{0.1cm}
This completes the proof. Observe that SquaredEquality formulas have $\mathcal{O}(n^2)$ clauses, hence the \qrat refutation is indeed linear in the size of the formula.
\end{proof}
Now we proceed to our simulation result in the next section.
\section{\qrat polynomially simulates \mergeres}\label{4}


In \mergeres, in addition to finding a winning strategy for the universal player, the proof system also derives the empty clause $\bot$ through a sequence of sound rules.
That is at the end, \mergeres proves that the given QBF is false in two different ways simultaneously. Firstly, by providing through sound rules, a sequence of clauses with only existential literals $C_1, C_2 \dots, C_k = \bot$. Secondly, by providing a countermodel for the QBF through merge-maps.\vspace{0.15cm} \\
While deriving the clauses $C_i$ in the sequence above, \mergeres consults the corresponding partial strategies presented in the hypothesis and makes sure that they meet certain criteria (Definition~\ref{mres}) to maintain soundness. Therefore, this sequence of clauses depends on the partial strategies that the \mergeres proof is building. 
However, it is sufficient for any proof system to produce either of these proof types to prove the falseness of any QBF.

So, even if a proof system $f$ can efficiently simulate through its sound rules the sequence of clauses $C_1, C_2, \dots, C_k$ then we can say that $f$ polynomially simulates \mergeres. $f$ is not required to build or consult the partial strategies built by \mergeres.

The other way of simulating \mergeres by a proof system $f$ would be to simulate the process of building the partial strategies of the \mergeres proof.

\vspace{0.15cm}
\noindent
We show that \qrat can efficiently simulate \mergeres by simulating it's sequence of clauses $C_1, C_2, \dots, C_k$, as is mentioned in the first process.

\begin{theorem}\label{simul}
\qrat polynomially simulates \mergeres.
\end{theorem}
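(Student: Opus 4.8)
The plan is to follow the first strategy outlined in the text: given a \mergeres refutation $\pi := L_1,\dots,L_k$ of a QBF $\Phi = Q.\phi$, I will build a \qrat refutation that mirrors the sequence of clauses $C_1, C_2, \dots, C_k = \bot$, ignoring the merge-maps entirely. The key insight I would establish is a lemma (this is presumably Lemma~\ref{existential_clause} referenced earlier) stating: \emph{for every line $L_i$ of $\pi$, the clause $C_i$ is an AT with respect to $\phi$}. Once this lemma is in hand, the \qrat refutation is immediate: process the lines $L_1, \dots, L_k$ in order, and for each $i$ add $C_i$ to the current formula. Axiom lines add a clause that is AT (by the lemma), hence addable by QRATA; resolution lines add the resolvent $C_i = \mathrm{Res}(C_a, C_b, x)$, which is AT by Observation~\ref{obs:qrat-simulate-res} since $C_a, C_b$ are already present. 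The final clause $C_k = \bot$ completes the refutation, and the whole thing has size polynomial (in fact linear) in $|\pi|$ plus $|\phi|$.

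The heart of the argument, and the main obstacle, is proving that each $C_i$ is AT with respect to $\phi$ — in particular the \textbf{axiom} case, where $C_i$ is obtained from some $C \in \phi$ by deleting \emph{all} its universal literals. Naively $C_i$ is a strict sub-clause of $C$ and there is no reason for it alone to be entailed by $\phi$ via unit propagation. The resolution of this tension must come from the merge-maps: \mergeres is sound precisely because the merge-maps $\{M_i^u\}$ encode a partial countermodel consistent with $C_i$. The idea I would pursue is to use the final winning strategy $\sigma := \{M_k^u : u \in U\}$ (equivalently, the countermodel built by $\pi$) to argue semantically. Under the universal responses dictated by $\sigma$, every clause $C$ of $\phi$ that was used as an axiom becomes, after substituting the universal literals by their $\sigma$-values, \emph{falsified on its universal part} — that is, $\sigma$ sets every universal literal of $C$ to $0$ — so that $C$ semantically reduces to exactly its existential sub-clause $C_i$. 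Then $\phi$ restricted by $\sigma$ entails $C_i$; but I must upgrade this semantic entailment to an AT (i.e.\ unit-propagation) statement without reference to $\sigma$, since \qrat's AT check is purely propositional on $\phi$.

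To bridge that gap I would prove the AT claim by induction along the proof $\pi$, strengthening the statement: show that for each $i$, unit propagation on $\phi \wedge \overline{C_i}$ derives $\bot$, by induction on the structure, using that the merge-maps guarantee the relevant "sibling" clauses needed for the propagation are available. Concretely, for an axiom line derived from $C \in \phi$: the assignment $\overline{C_i}$ sets all existential literals of $C$ to $0$, leaving $C$ as a clause over only its universal literals; I then need to show unit propagation closes this off. This is where the precise structure of how \mergeres chooses $M_i^u$ (the falsifying $u$-literal for $C$) and the consistency/isomorphism conditions maintained through resolution steps must be invoked — essentially re-deriving \mergeres soundness in a propagation-friendly form. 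An alternative, possibly cleaner route: show directly that for each axiom clause $C \in \phi$ used in $\pi$, the existential sub-clause $C_i$ is obtained by a sequence of \qrat universal-reduction-type steps (QRATU) on $C$, so that $C_i$ is derivable in \qrat from $\phi$; the \mergeres soundness machinery is exactly what certifies the needed QRAT-literal conditions. I would attempt this second route first, as it localises the difficulty to the axiom step and makes the resolution steps trivial via Observation~\ref{obs:qrat-simulate-res}.

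Finally I would check the polynomial bound: the \qrat proof has one clause-addition per \mergeres line (plus $O(|\phi|)$ QRATU steps per axiom to strip universals), each AT/QRAT check being over $\phi$ which has fixed size, so the total is $\mathrm{poly}(|\pi| + |\Phi|)$, and since $C_k = \bot$ the simulation yields a genuine \qrat refutation. Combined with the SquaredEquality separation recalled in the introduction (these formulas are easy for \mergeres but, by Theorem~\ref{short_proof}, also easy for \qrat, while conversely \mergeres cannot simulate \qrat because \qrat simulates e.g.\ \ecalculus on formulas hard for \mergeres), this gives the strict simulation claimed in the abstract.
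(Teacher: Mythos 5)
Your overall architecture matches the paper's: replay the clause sequence $C_1,\dots,C_k$ of the \mergeres refutation in \qrat, handling axiom lines via an AT argument for the existential sub-clause and resolution lines via Observation~\ref{obs:qrat-simulate-res}, for a linear-size simulation that ignores the merge-maps. The problem is the key lemma itself: you correctly identify the axiom case as the crux, but you do not actually prove it, and the routes you sketch would not work. The existential sub-clause $C_i$ of $C\in\phi$ is in general \emph{not} propositionally entailed by $\phi$ (take $\phi=(x\vee u)$ with prefix $\exists x\,\forall u$: the sub-clause $(x)$ does not follow from $\phi$ propositionally), so no amount of strengthening the induction, consulting the final strategy $\sigma$, or ``re-deriving \mergeres soundness in a propagation-friendly form'' can establish a purely propositional AT certificate for it; and your alternative QRATU route requires every universal literal of $C$ to be a \qrat-literal, which holds for SquaredEquality but not for arbitrary input QBFs. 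The merge-maps are a red herring here --- the paper explicitly never consults them.

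The missing idea is much simpler and entirely local to the single clause $C$: unit-propagating $\overline{C_i}$ against $C$ assigns all existential literals of $C$ to $0$ and leaves the purely universal clause $(u_1,\dots,u_m)$, and a clause containing only universal literals is the empty clause in the QBF setting (universal reduction applies to every literal, since no existential literal remains to block it; the paper adopts this as a convention in Section~\ref{2.2}). That one observation makes Lemma~\ref{existential_clause} a two-line proof, after which your treatment of the resolution steps and the size bound is exactly the paper's. So the gap is not in the plan but in the single step you flagged as the ``heart of the argument'': you need the QBF-specific collapse of purely universal clauses to $\bot$, not any information extracted from the merge-maps or from the rest of the derivation.
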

\begin{proof}
Given an \mergeres refutation $\pi$ = $L_1,...,L_k$ for a false QBF $\Phi = Q.\phi$ with $X$ (resp.~$U$) as the set of existential (resp.~universal) variables, where each $L_i = (C_i,\{ {M_i}^u : u\in U \} )$, we effectively compute a \qrat refutation $\Pi$ for the QBF $\Phi = Q.\phi$ as follows: \vspace{0.1cm}
\begin{alphaenumerate}
\item \textbf{Axiom Steps:} For every axiom step in $\pi$ (say $L_i$), the following Lemma holds:\vspace{0.1cm}
\begin{lemma}\label{existential_clause}
The existential sub-clause of a clause $C \in \phi$ is AT w.r.t the QBF $\Phi$.
\end{lemma}
\vspace{-0.4cm}
\begin{proof}
Let $C= \{ e_1,.., e_n, u_1,..., u_m \}$ be an input clause in the QBF $Q.\phi$, where $e_1,..,e_n$ are existential literals and $u_1,..,u_m$ are universal literals with arbitrary order in $Q$. The existential sub-clause $C_i= \{ e_1,.., e_n \}$ is an AT w.r.t $Q.\phi$: 
since $\overline{C_i} \wedge C ~\vdash_1~ (u_1,..., u_m) =\bot$. The clause $(u_1,...,u_m)$ is the empty clause since all its literals are universal.
\end{proof}
\vspace{-0.3cm}
Using Lemma \ref{existential_clause}, we can directly add the existential sub-clause $C_i$ belonging to the axiom step $L_i$ of \mergeres proof $\pi$ in the \qrat proof $\Pi$. This can be done for all the axiom steps in order as they appear in $\pi$. 
\item \textbf{Resolution Steps:} Resolution step in \mergeres is executed provided some conditions on hypothesis merge maps are met (Definition \ref{mres}).
So while simulating, \qrat only needs to simulate the resolution steps where soundness part is already taken care of by \mergeres via maintaining partial strategies through merge-maps.

\vspace{0.1cm}
The resolvent clause is known to be AT w.r.t. QBF $\Phi$, so can be directly added to $\Pi$ (Observation~\ref{obs:qrat-simulate-res}).
When \qrat simulates the last line of $\pi$ through resolution, we get the corresponding $\bot$ in $\Pi$ as well.
\end{alphaenumerate}
\vspace{0.05cm}
This completes the simulation. Observe that $\Pi$ is a valid \qrat refutation of the input formula $\Phi$.
Note that, the size of the \qrat proof is linear in the size of the corresponding \mergeres proof. 
Also observe that the sequence of clauses $C_1,...,C_k$ in the lines of the \mergeres are in itself a valid \qrat proof!
\end{proof}
\vspace{-0.2cm}
On the other hand, we observe that \mergeres is not powerful enough to efficiently simulate the \qrat proof system. To be precise, we have:\vspace{-0.1cm}
\begin{observation}
\mergeres cannot simulate \qrat.
\end{observation}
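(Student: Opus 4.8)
The plan is to establish the separation by exhibiting a family of false QBFs that possess polynomial-size \qrat refutations but require super-polynomial \mergeres refutations. The natural candidate is the QBF family for which \mergeres is known to be hard; since \mergeres admits strategy extraction by design, any QBF family encoding computationally hard winning strategies for the universal player must have large \mergeres refutations. Concretely, I would invoke the \qrat simulation of \ecalculus~\cite{ecalculus_simul} together with a formula family that is easy for \ecalculus but hard for \mergeres, or alternatively use the strategy-extraction lower bound for \mergeres directly against a family that has short \qrat proofs.

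First I would recall that \mergeres has strategy extraction (indeed, this is emphasized throughout the excerpt), so a \mergeres refutation of size $s$ yields a universal winning strategy computable by a circuit of size $\mathrm{poly}(s)$. Second, I would point to a QBF family $\Phi_n$ whose unique countermodels are provably not computable by polynomial-size circuits under a suitable hardness assumption, or — to get an unconditional result — a family for which such a lower bound is known in the literature (for instance, formulas built on the \emph{QPARITY}-style or Kleine~B\"uning-style constructions, or the families used to separate \mergeres from \lqrc-type systems in~\cite{mres_paper,hard_mres}). Third, I would verify that this same family $\Phi_n$ admits $\mathrm{poly}(n)$-size \qrat refutations; this could follow because $\Phi_n$ is easy for \ecalculus or \lqrc or \qurc, all of which \qrat simulates (as summarized in Figure~\ref{fig1} and the cited works~\cite{ecalculus_simul,ldqres_simul}), or because a direct \qrat refutation can be given along the lines of Theorem~\ref{short_proof}. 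Combining Theorem~\ref{simul} (which gives $\qrat \leq_p \mergeres$) with this separation yields that \qrat strictly p-simulates \mergeres, and in particular \mergeres cannot simulate \qrat.

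The main obstacle is pinning down a \emph{single} formula family that is simultaneously easy for \qrat and hard for \mergeres, and doing so unconditionally. Strategy-extraction lower bounds for \mergeres are the cleanest route, but one must ensure the chosen family still has a short \qrat proof — the tension is that \qrat is very powerful, so easy families are abundant, yet one needs the \mergeres hardness to be established or establishable. I expect the cleanest argument will cite an existing \mergeres lower bound (via the hard-strategy / circuit-size argument) for a family already known to be easy in one of the systems \qrat simulates, so that both halves are off-the-shelf. If no such family is cited elsewhere in the paper, a fallback is to prove the \mergeres lower bound for a mild variant of the \ecalculus-easy formulas using the by-design strategy extraction of \mergeres together with a known circuit lower bound for the relevant strategy function.

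\begin{proof}
Since $\qrat \leq_p \mergeres$ by Theorem~\ref{simul}, it suffices to exhibit a family of false QBFs with polynomial-size \qrat refutations and only super-polynomial-size \mergeres refutations. As \mergeres admits strategy extraction by design (Definition~\ref{mres}), a \mergeres refutation of size $s$ of a false QBF yields a universal winning strategy of size $\mathrm{poly}(s)$; hence any false QBF family whose winning strategies require large circuits is hard for \mergeres. Taking such a family which is nonetheless easy for a system that \qrat simulates (e.g.\ \ecalculus, \lqrc, or \qurc; see Figure~\ref{fig1}), or which admits a direct short \qrat refutation in the style of Theorem~\ref{short_proof}, gives polynomial-size \qrat refutations while forcing super-polynomial \mergeres size. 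Therefore $\mergeres \not\leq_p \qrat$, completing the proof.
\end{proof}
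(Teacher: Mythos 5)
Your high-level plan --- exhibit a single family of false QBFs that is easy for \qrat but hard for \mergeres --- is exactly the paper's strategy, but you never actually produce such a family, and the fallback route you sketch does not work. The paper's proof is essentially a two-line citation: the KBKF-lq$[n]$ formulas of \cite[Definition 3]{BalabanovWJ14} are hard for \mergeres by \cite[Theorem 19]{hard_mres} and easy for \qurc by \cite[Theorem 2]{BalabanovWJ14}; since \qrat simulates \qurc~\cite{ldqres_simul}, these formulas are easy for \qrat, and the separation follows. Crucially, the \mergeres lower bound invoked there is a direct proof-size lower bound, \emph{not} one obtained via strategy extraction.

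The deeper problem is that your proposed route is internally inconsistent. You want a family whose universal winning strategies require large circuits (so that \mergeres, having strategy extraction by design, is forced into long refutations), and you then want to certify that this same family is easy for \qrat by observing that it is easy for \ecalculus, \lqrc, or \qurc. But all of those systems also admit efficient strategy extraction, so any family with computationally hard winning strategies is hard for them as well; you cannot have it both ways. The only way to salvage the hard-strategy route is to give a direct short \qrat refutation of such a family (e.g.\ the Select formulas of~\cite{ChewC20}, mentioned in the paper's conclusion), and even then the hardness of the strategies is conditional on P $\neq$ PSPACE, so you would obtain only a conditional separation. The paper's choice of KBKF-lq$[n]$ sidesteps both issues and yields the result unconditionally; without naming a concrete family with an established \mergeres lower bound, your argument does not close.
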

\vspace{-0.3cm}
\begin{proof}
There exists a family of false QBFs KBKF-lq$[n]$~\cite[Definition 3]{BalabanovWJ14} which are shown to be hard for \mergeres in \cite[Theorem 19]{hard_mres}, but easy for \qurc~\cite[Theorem 2]{BalabanovWJ14}. Since \qrat simulates \qurc\cite{ldqres_simul}, these formulas are easy for \qrat. So this concludes that \mergeres cannot simulate \qrat.
\end{proof}

\vspace{-0.3cm}
\section{Conclusions and future work}\label{5}
\qrat proof system is capable of efficiently simulating both the expansion-based QBF-solving approach, i.e., \ecalculus~\cite{ecalculus_simul} and the CDCL-based QBF-solving approach \lqrc~\cite{ldqres_simul}. It is also known that \qrat can simulate all the existing preprocessing techniques used by current QBF-solvers~\cite{qrat_paper}. In this paper, 
we show that \qrat can even strictly simulate the new proof system which builds partial strategies into proofs, that is, the \mergeres proof system~\cite{mres_paper}. Thus extending the importance of \qrat among QBF proof systems.
Also, we have given a short \qrat refutation for the SquaredEquality formulas introduced in \cite{mres_paper}.\vspace{0.12cm}\\
\noindent
Work in this domain still has many interesting open problems; we would like to mention a few of the same: 

Although \mergeres was inspired from \lqrc, it is still open if they are incomparable or if one can simulate the other. 

\qrat simulation of \ecalculus has been proven in~\cite{ecalculus_simul}, but it is still open whether or not \qrat can simulate it's powerful variant the \irc proof system? Note that \irc cannot simulate \qrat proof system since the former is incomparable with \lqrc and \qrat simulates \lqrc~\cite{ldqres_simul}. For the complexity landscape of these systems, refer Figure~\ref{fig1} in this paper.

Given a false QBF $\Phi = Q.\phi$, \mergeres builds a winning strategy of the universal player by design. In case, $\Phi$ has a computationally hard winning strategy, \mergeres refutation of $\Phi$ is also going to be large. As a result, proving lower bound for such QBFs in \mergeres is easy. On the other hand, \qrat does not admit strategy extraction for false QBFs unless P $=$ PSPACE~\cite{ChewC20}. That is, there exists a family of false QBFs (the Select Formulas from~\cite[Section 4.1]{ChewC20} which are easy for \qrat but have computationally hard universal winning strategies provided P $\neq$ PSPACE. As a result, establishing lower bound results for false QBFs in \qrat using the strategy extraction technique is not possible. In fact, proving a lower bound result in \qrat is still open. It should be noted that \qrat admits strategy extraction for true QBFs~\cite{HeuleSB14},
therefore strategy extraction can still be used to prove \qrat lower bounds for true QBFs.

\bibliography{lipics-v2021-sample-article}

\end{document}